\begin{document}

\title{\bf The overshoot and phenotypic equilibrium in characterizing cancer dynamics of reversible phenotypic plasticity}

\date{}
\maketitle

\author{Xiufang Chen$^{7,2}$, Yue Wang$^{3}$, Tianquan Feng$^{4}$, Ming Yi$^{5,6}$, Xingan Zhang$^{2,*}$, Da Zhou$^{1,*}$}

\begin{enumerate}
  \item School of Mathematical Sciences, Xiamen University,
Xiamen 361005, P.R. China \\ ($*$Co-corresponding Author, zhouda@xmu.edu.cn)
  \item School of Mathematics and Statistics, Central China Normal University,
Wuhan 430079, P. R. China \\ ($*$Co-corresponding Author, zhangxinan@hotmail.com)
  \item Department of Applied Mathematics, University of Washington,
Seattle, WA 98195, USA
  \item School of Teachers' Education, Nanjing Normal University, Nanjing 210023, China
  \item Department of Physics, College of Science, Huazhong Agricultural University, Wuhan, Hubei 430070, China
  \item Key Laboratory of Magnetic Resonance in Biological Systems, Wuhan Institute of
Physics and Mathematics, Chinese Academy of Sciences, Wuhan, 430071, P. R. China
  \item School of Computer Science and Information Engineering, Qilu Institute of Technology, Jinan, Shandong 250000, China
\end{enumerate}

\newpage

\begin{abstract}
The paradigm of phenotypic plasticity indicates reversible relations of different cancer cell phenotypes,
which extends the cellular hierarchy proposed by the classical cancer stem cell (CSC) theory.
Since it is still questionable if the phenotypic plasticity is a crucial improvement to the hierarchical model
or just a minor extension to it,
it is worthwhile to explore the dynamic behavior characterizing the reversible phenotypic plasticity.
In this study we compare the hierarchical model and the reversible model in predicting the cell-state
dynamics observed in biological experiments.
Our results show that the hierarchical model shows significant disadvantages over the reversible model in
describing both long-term stability (phenotypic equilibrium) and short-term transient dynamics (overshoot) of cancer cells.
In a very specific case in which the total growth of population due to each cell type is identical,
the hierarchical model predicts neither phenotypic equilibrium nor overshoot, whereas the reversible model
succeeds in predicting both of them. Even though the performance of the hierarchical model can be improved
by relaxing the specific assumption, its prediction to the phenotypic equilibrium strongly depends on a precondition that may be
unrealistic in biological experiments, and it also fails to capture the overshoot of CSCs. By comparison, it is more likely for
the reversible model to correctly describe the stability of the phenotypic mixture and various types of overshoot
behavior.

\end{abstract}

\section{Introduction}
\label{Introduction}

The cancer stem cell theory has provided a hierarchical
model of how diverse cancer cells being organized
\cite{reya2001stem,jordan2006cancer,dalerba2007cancer}.
Similar to the stem cell theory in normal tissues,
this hierarchical model assumes that a small number of stem-like cancer cells
(termed cancer stem cells, CSCs) are capable of self-renewal
and differentiation into other more committed cancer cells
(termed non-stem cancer cells, NSCCs) but not vice versa.
That is, CSCs are thought to be at the apex of this cellular
hierarchy. However, some recent researches may
extend this unidirectional relation of cancer cells.
It has been reported that cancer cells can convert from
NSCC phenotype to CSC phenotype
(\emph{e.g.} breast cancer \cite{meyer2009dynamic,chaffer2013poised},
melanoma \cite{quintana2010phenotypic}, colon cancer \cite{yang2012dynamic}
and glioblastoma multiforme \cite{fessler2015endothelial}).
Furthermore, the interconversions among cancer cell phenotypes
have also been found (breast cancer \cite{gupta2011stochastic}).
All these works indicate reversible relations of different cancer cells
(termed \emph{phenotypic plasticity}, which has long been an issue
concerned in bacterial populations \cite{kussell2005phenotypic}).

Very recently special attention has already been paid to the reversible cancer models
with the phenotypic plasticity by theoreticians.
In particular, dos Santos and da Silva explained the variable frequencies
of CSCs in tumors by establishing a model
with stochastic cell plasticity \cite{dos2013possible,dos2013noise}.
Leder \emph{et al} investigated the model of reversible conversions between
the stem-like resistant cells (SLRCs) and the differentiated
sensitive cells (DSCs) in glioblastomas  \cite{leder2014mathematical}.
Wang \emph{et al} showed how tumor heterogeneity arises in the model of
cooperating CSC hierarchy with cell plasticity \cite{wang2014dynamics}.
Zhou \emph{et al} showed that the de-differentiation from NSCC phenotype to CSC phenotype
was essential for explaining the transient increase of the minority populations of CSCs
observed in cancer cell lines \cite{zhoup2013opulation,zhou2014multi}.
Chen  \emph{et al} studied stochastic models that capture the transitions
between endocrine therapy responsive and resistant states of breast cancer cells
\cite{chen2014mathematical}. Zhou \emph{et al} investigated nonequilibrium dynamics
with phenotype transitions of cancer cells \cite{zhou2014nonequilibrium}.
Jilkine and Gutenkunst studied the effect of de-differentiation on time to
mutation acquisition in cancers \cite{jilkine2014effect}.

\begin{figure}
\begin{center}
\includegraphics[width=1.2\textwidth]{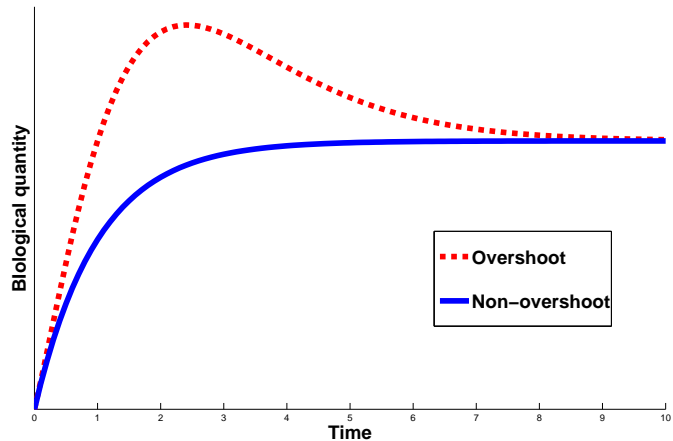}
\caption{Overshoot is a type of non-monotonic phenomenon \cite{jia2013overshoot} that,
starting from a state that is lower than the equilibrium level,
the process first increases above the final equilibrium
level and then gradually decreases to it.}
\end{center}
\end{figure}

However, it is still questionable if the phenotypic plasticity is a crucial improvement
to the hierarchical model or just a minor extension to it \cite{zapperi2012cancer,easwaran2014cancer}.
Thus a rigorous analysis on the characteristics owned by the model
with the phenotypic plasticity is necessary for model validation.
In this study, we try to investigate this issue by giving a comparative study of the reversible
model and the hierarchical model.
Note that Gupta \emph{et al} studied SUM149 and SUM159 breast cancer cell lines
\cite{gupta2011stochastic}, in which
two interesting phenomena were observed: \emph{phenotypic equilibrium}
and \emph{overshoot}. For the phenotypic equilibrium, they found that
the breast cancer cell lines will tend to a stable phenotypic mixture
over time regardless of initial states.
Similar results on the phenotypic equilibrium were also reported in
\cite{chang2008transcriptome,chaffer2011normal,yang2012dynamic}.
For the overshoot, they found that shortly after the cell sorting,
the proportion of the minority subpopulation
will increase transiently above the final equilibrium level, and then
decrease to it (Fig. 1). Enlightened by these observations,
we show that the reversible model is more capable of
capturing the phenotypic equilibrium and overshoot than the hierarchical model.
Under a very specific assumption that the total population growth due to each cell
phenotype is identical, the hierarchical model can perform neither phenotypic
equilibrium nor overshoot, whereas the reversible model succeeds in predicting both
the phenotypic equilibrium and three types of overshoots
(asynchronous, synchronous and oscillating overshoots). Even though the performance of
the hierarchical model can be improved by relaxing the specific assumption,
it is still not good enough to correctly capture the cell-state dynamics.
On one hand, the phenotypic equilibrium predicted by the hierarchical
model strongly depends on the condition that the self-contributed
growth rate by CSCs is faster than that of more committed cancer cells.
However, it has been reported that this condition cannot be satisfied in some cancers
\cite{patrawala2005side,fillmore2008human}. On the other hand, by fitting the two models
to experimental data in \cite{gupta2011stochastic}, the hierarchical
model can only fit the overshoot of NSCCs, but it cannot capture the overshoot
of CSCs. By contrast, the reversible model can fit both of them.
Therefore, our results generally imply that the reversible model
shows distinct advantages over the hierarchical model in
predicting both long-term and transient dynamics of cancer cells.

The paper is organized as follows. The mathematical model is presented in Section 2.
Main results are shown in Section 3, where we give a comparative study
of the reversible and hierarchical models. Conclusions are presented in Section 4.

\section{Model}
\label{Model}

In this section we describe the assumptions of our model.
We model cancer as population dynamics of cancer cells, each
cancer cell can be assigned to one of the following cell phenotypes:
$\textrm{CSC}_0$,
$\textrm{NSCC}_1$,
$\textrm{NSCC}_2$, ..., and
$\textrm{NSCC}_n$.
Let $X_0(t)$, $X_1(t)$,..., $X_n(t)$ be the cell numbers of $\textrm{CSC}_0$, $\textrm{NSCC}_1$, ..., $\textrm{NSCC}_n$ at time $t$
respectively, then the model can be described as a cellular dynamical system of $\overrightarrow{X_t}=(X_0(t), X_1(t),..., X_n(t))^T$.
In particular, the model will reduce to a two-phenotypic model when $n=1$. That is, besides CSC-phenotype, all the other
cancer cells are grouped into one whole NSCC-phenotype. This CSC-NSCC model has extensively been investigated in
previous literature \cite{zapperi2012cancer,dos2013possible,dos2013noise,zhoup2013opulation,leder2014mathematical,wang2014dynamics}.
Here we pay more attention to the multi-phenotypic case ($n\geq 2$). As a starting point,
the case with $n=2$ should be a natural and favorable candidate for theoretical analysis.
Meanwhile, motivated by Gupta \emph{et al} \cite{gupta2011stochastic} where three phenotypes (stem-like, basal and luminal) were
identified, we are concerned about $\textrm{CSC}_0$-$\textrm{NSCC}_1$-$\textrm{NSCC}_2$ model in light of both theoretical and experimental reasons.
Unlike the three-phenotypic cell lineages investigated in \cite{johnston2007mathematical,liu2013nonlinear, pei2015fluctuation}
where cancer stem cells, progenitor cells, and terminally differentiated cells are hierarchically cascaded,
here we are more interested in the cell population structure investigated in \cite{gupta2011stochastic}.
In their work, CSCs can differentiate into $\textrm{NSCC}_1$ and $\textrm{NSCC}_2$, respectively.
They were not assumed to be cascaded. Moreover, by the phenotypic plasticity,
they can interconvert into each other and de-differentiate into CSCs.

We now present the cellular processes included in our model.
For $\textrm{CSC}_0$, it can not only divide symmetrically into two identical
$\textrm{CSC}_0$ daughters, but also divide asymmetrically into $\textrm{CSC}_0$ and
$\textrm{NSCC}_i$ ($i=1$ or $2$)
\cite{boman2007symmetric,dingli2007symmetric,marciniak2009modeling}.
That is,
\begin{itemize}
  \item  $CSC_0\overset{\alpha_1}{\longrightarrow} CSC_0+CSC_0$.
  \item  $CSC_0\overset{\alpha_2}{\longrightarrow} CSC_0+NSCC_1$.
  \item  $CSC_0\overset{\alpha_3}{\longrightarrow} CSC_0+NSCC_2$
  \footnote{It should be pointed out that, the symmetric differentiation ``$CSC_0{\longrightarrow} NSCC_i+NSCC_i$'' is also an important type of transition \cite{morrison2006asymmetric} that should be considered in the model. However, it can be shown that the main results we are interested in this work can still hold (with only minor changes in proofs) even if adding this transition into Eq. (\ref{Linear1}). Therefore, to keep the model as minimal as possible, the symmetric differentiation is not included in presented model.}.
\end{itemize}
For $\textrm{NSCC}_1$ and $\textrm{NSCC}_2$, besides the phenotypic plasticity, they can also perform symmetric divisions and cell death. That is,
\begin{itemize}
  \item  $NSCC_1\overset{\beta_1}{\longrightarrow} NSCC_1+NSCC_1$;
  \item  $NSCC_1\overset{\beta_2}{\longrightarrow} CSC_0$;
  \item  $NSCC_1\overset{\beta_3}{\longrightarrow} NSCC_2$;
  \item  $NSCC_1\overset{\beta_4}{\longrightarrow} \emptyset$;
  \item  $NSCC_2\overset{\gamma_1}{\longrightarrow} NSCC_2+NSCC_2$;
  \item  $NSCC_2\overset{\gamma_2}{\longrightarrow} CSC_0$;
  \item  $NSCC_2\overset{\gamma_3}{\longrightarrow} NSCC_1$;
  \item  $NSCC_2\overset{\gamma_4}{\longrightarrow} \emptyset$.
\end{itemize}
Based on the above cellular processes, the dynamics of  $\overrightarrow{X_t}=(X_0, X_1, X_2)^T$
can be captured by the following ordinary differential equations (ODEs)
\begin{linenomath*}
\begin{equation}
 \left\{
   \begin{aligned}
   \frac{dX_0}{dt} &= \alpha_1 X_0+\beta_2 X_1+\gamma_2 X_2\\
   \frac{dX_1}{dt} &= \alpha_2 X_0+(\beta_1-\beta_2-\beta_3-\beta_4) X_1+\gamma_3 X_2\\
   \frac{dX_2}{dt} &= \alpha_3 X_0+\beta_3 X_1+(\gamma_1-\gamma_2-\gamma_3-\gamma_4)X_2\\
    \end{aligned}
   \right.
   \label{Linear1}
\end{equation}
\end{linenomath*}
By letting
\begin{linenomath*}
\begin{equation}
Q=[q_{ij}]=\left(\begin{array}{ccc}
\alpha_1 & \beta_2 & \gamma_2  \\
\alpha_2 & \beta_1-\beta_2-\beta_3-\beta_4 & \gamma_3  \\
\alpha_3 & \beta_3 & \gamma_1-\gamma_2-\gamma_3-\gamma_4  \\
\end{array}\right),
\label{Matrix1}
\end{equation}
\end{linenomath*}
Eq. (\ref{Linear1}) can be expressed as
\begin{linenomath*}
\begin{eqnarray}
\frac{d \overrightarrow{X_t}}{d t}=Q\overrightarrow{X_t}.
\label{Linear2}
\end{eqnarray}
\end{linenomath*}
For this model, we have the following remarks:\\
\emph{(1)}
Each element $q_{ij}$ of $Q$ can be seen as the per
capita growth rate of $i^{th}$ phenotype contributed by
$j^{th}$ phenotype. Let us take the first row of $Q$ as an example.
$\alpha_1$ is the symmetric division rate of $\textrm{CSC}_0$, thus
the growth rate of $X_0$ due to $\textrm{CSC}_0$ themselves can be expressed
as $\alpha_1 X_0$; $\beta_2$ and $\gamma_2$ are the
de-differentiation rates
from $\textrm{NSCC}_1$ and $\textrm{NSCC}_2$ respectively,
so the growth rate of
$X_0$ due to $\textrm{NSCC}_1$ and $\textrm{NSCC}_2$ is $\beta_2 X_1+\gamma_2 X_2$.
In this way, it is easy to explain the biological meaning of
``$dX_0/dt = \alpha_1 X_0+\beta_2 X_1+\gamma_2 X_2$'', \emph{i.e.} the rate of change of
$X_0$ is the sum of the growth rates contributed by all the three cell phenotypes in the
population. Similarly, we can explain the other two equations in Eq. (\ref{Linear1}).\\
\emph{(2)}
Note that $\beta_2$, $\beta_3$, $\gamma_2$ and $\gamma_3$ are the parameters associated with the phenotypic plasticity,
the model will reduce to the hierarchical model by letting them be zero. Accordingly,
$Q$ will become to a lower-triangular matrix
\begin{linenomath*}
\begin{equation}
Q^*=[q^*_{ij}]=\left(\begin{array}{ccc}
\alpha_1 & 0 & 0  \\
\alpha_2 & \beta_1-\beta_4 & 0  \\
\alpha_3 & 0 & \gamma_1-\gamma_4  \\
\end{array}\right).
\label{Matrix2}
\end{equation}
\end{linenomath*}
\emph{(3)}
Let $W=X_0+X_1+X_2$ be the total number of the population, then
\begin{linenomath*}
\begin{equation}
\frac{dW}{dt} = (\alpha_1+\alpha_2+\alpha_3) X_0+(\beta_1-\beta_4) X_1+(\gamma_1-\gamma_4) X_2.
\label{total}
\end{equation}
\end{linenomath*}
Note that the coefficient of each $X_i$ on right-hand side of Eq. (\ref{total}) is just the
the sum of the corresponding column in $Q$, \emph{i.e.}
the per capita growth rate of the whole population due to
$X_i$. In particular, when
\begin{linenomath*}
\begin{equation}
\alpha_1+\alpha_2+\alpha_3=\beta_1-\beta_4=\gamma_1-\gamma_4=\kappa,
\end{equation}
\end{linenomath*}
the growth rate of the whole population due to each phenotype
will be identical. This special case implies meaningful biological significance
and has been reported in some cancer cell lines (breast cancer \cite{gupta2011stochastic}
and colon cancer \cite{zhoup2013opulation}). In this case
Eq. (\ref{total}) will reduce to a very simple form
\begin{linenomath*}
$$\frac{dW}{dt} = \kappa W,$$
\end{linenomath*}
that is, $W$ will grow exponentially with constant rate. We will show that,
even in this special case the reversible model and the hierarchical models are quite different in
predicting the phenotypic equilibrium and overshoot. For convenience, we denote
\begin{linenomath*}
\begin{equation}
A_1=\alpha_1+\alpha_2+\alpha_3,
\end{equation}
\begin{equation}
A_2=\beta_1-\beta_4,
\end{equation}
\begin{equation}
A_3=\gamma_1-\gamma_4.
\end{equation}
\end{linenomath*}
\emph{(4)}
Note that all the parameters are assumed to be constant.
In other words, there is no feedback control included in presented model.
This is of course a simplification of the biological facts. To model real
biological systems, feedback mechanisms are indispensable
\cite{lander2009cell,lo2009feedback,komarova2013principles,liu2013nonlinear}.
However, we will show that our model is not overly simplistic to characterize
the phenotypic plasticity in comparison to the cellular hierarchy.

In the remaining of this section, we will convert the cell number equations Eq. (\ref{Linear1})
into the following cell proportion equations Eq. (\ref{Nonlinear1}).
This is because in reality, relative cell numbers, rather than absolute cell numbers,
are usually measured by fluorescence-activated cell sorting (FACS) experiments.
Let $x_0=X_0/W$, $x_1=X_1/W$ and $x_2=X_2/W$, by variable substitutions in Eq. (\ref{Linear1})
we have
\begin{linenomath*}
\begin{equation}
 \left\{
   \begin{aligned}
   \frac{dx_0}{dt} &= \alpha_1 x_0+\beta_2 x_1+\gamma_2 x_2-x_0(A_1x_0+A_2x_1+A_3x_2)\\
   \frac{dx_1}{dt} &= \alpha_2 x_0+(\beta_1-\beta_2-\beta_3-\beta_4) x_1+\gamma_3 x_2-x_1(A_1x_0+A_2x_1+A_3x_2)\\
   \frac{dx_2}{dt} &= \alpha_3 x_0+\beta_3 x_1+(\gamma_1-\gamma_2-\gamma_3-\gamma_4)x_2-x_2(A_1x_0+A_2x_1+A_3x_2)\\
    \end{aligned}
   \right.
   \label{Nonlinear1}
\end{equation}
\end{linenomath*}
Note that $x_0+x_1+x_2=1$, the freedom of Eq. (\ref{Nonlinear1}) is actually two. By letting
$x_2=1-x_0-x_1$, Eq. (\ref{Nonlinear1}) can be transformed into
\begin{linenomath*}
\begin{equation}
 \left\{
   \begin{aligned}
   \frac{dx_0}{dt} &= -(A_1-A_3)x_0^2-(A_2-A_3)x_0x_1+A_4x_0+A_5x_1+\gamma_2\\
   \frac{dx_1}{dt} &= -(A_2-A_3)x_1^2-(A_1-A_3)x_0x_1+A_6x_0+A_7x_1+\gamma_3\\
    \end{aligned}
   \right.
   \label{Nonlinear2}
\end{equation}
\end{linenomath*}
where
\begin{linenomath*}
\begin{equation}
A_4=\alpha_1-(\gamma_1-\gamma_4)-\gamma_2,
\end{equation}
\begin{equation}
A_5=\beta_2-\gamma_2,
\end{equation}
\begin{equation}
A_6=\alpha_2-\gamma_3,
\end{equation}
\begin{equation}
A_7=(\beta_1-\beta_2-\beta_3-\beta_4)-(\gamma_1-\gamma_4)-\gamma_3.
\end{equation}
\end{linenomath*}
When $\beta_2=\beta_3=\gamma_2=\gamma_3=0$, \emph{i.e.}
the phenotypic conversion rates are set to be zero,
we have
\begin{linenomath*}
\begin{equation}
 \left\{
   \begin{aligned}
   \frac{dx_0}{dt} &= -(A_1-A_3)x_0^2-(A_2-A_3)x_0x_1+A'_4x_0+A'_5x_1\\
   \frac{dx_1}{dt} &= -(A_2-A_3)x_1^2-(A_1-A_3)x_0x_1+A'_6x_0+A'_7x_1\\
    \end{aligned}
   \right.
   \label{Nonlinear3}
\end{equation}
\end{linenomath*}
where
\begin{linenomath*}
\begin{equation}
A'_4=\alpha_1-(\gamma_1-\gamma_4),
\end{equation}
\begin{equation}
A'_5=0,
\end{equation}
\begin{equation}
A'_6=\alpha_2,
\end{equation}
\begin{equation}
A'_7=(\beta_1-\beta_4)-(\gamma_1-\gamma_4).
\end{equation}
\end{linenomath*}
We term Eq. (\ref{Nonlinear2}) the \emph{reversible model},
and term Eq. (\ref{Nonlinear3}) the \emph{hierarchical model}.
The comparison of these two models is the main task in the following section.

In this section we try to give a comparative study of  Eqs. (\ref{Nonlinear2})
and (\ref{Nonlinear3}).

One trivial difference between the two models happens when
there is no CSC in the initial population. In this case,
since in the hierarchical model CSCs cannot be spontaneously created by other cell phenotypes,
there is no CSC in the population all the time,
whereas new-born CSCs can be generated from NSCCs
in the reversible model.
However, in reality it is quite impossible to completely purify
cancer cell lines without any CSCs by cell sorting. Based on this fact,
we only discuss the cases with positive initial states of CSCs afterwards.
Both long-term stable behavior and short-term transient dynamics
are investigated when comparing the two models.
Enlightened by \cite{gupta2011stochastic}, we are particularly concerned about
their predictions to the phenotypic
equilibrium and overshoot.

\subsection{Special case}

Before exploring the general models in Eqs. (\ref{Nonlinear2}) and (\ref{Nonlinear3}),
we first discuss a very specific case with $A_1=A_2=A_3$.
Investigating this special case can provide some valuable insights to the general analysis.

For the hierarchical model in this case,
\begin{linenomath*}
\begin{equation}
 \left\{
   \begin{aligned}
   \frac{dx_0}{dt} &= A'_4x_0\\
   \frac{dx_1}{dt} &= A'_6x_0\\
    \end{aligned}
   \right.
   \label{Linearhierarchicalmodel}
\end{equation}
\end{linenomath*}
It is easy to give their explicit solutions
\begin{linenomath*}
\begin{equation}
 \left\{
   \begin{array}{l}
  x_0(t) = x_0(0)e^{-(\alpha_2+\alpha_3)t}\\
   x_1(t) =\frac{\alpha_2}{\alpha_2+\alpha_3}x_0(0)(1-e^{-(\alpha_2+\alpha_3)t})+x_1(0)\\
    \end{array}
   \right.
\end{equation}
\end{linenomath*}
where $x_0(0)$ and $x_1(0)$ are the initial states. So we have
\begin{enumerate}
  \item Since $x_0(t)$ and $x_1(t)$ are both monotonic functions of $t$, they cannot perform
  overshoot.
  \item Note that
  \begin{linenomath*}
$$\lim_{t\rightarrow +\infty}x_0(t)=0,$$
$$\lim_{t\rightarrow +\infty}x_1(t)=\frac{\alpha_2}{\alpha_2+\alpha_3}x_0(0)+x_1(0),$$
\end{linenomath*}
CSCs proportion will eventually be zero, and the equilibrium proportions of $x_1$
depends on the initial states. Note that the phenotypic equilibrium corresponds to
the stabilization of the phenotypic \emph{mixture}
 that is \emph{independent} of the initial states
 \cite{gupta2011stochastic}, the hierarchical model fails to predict the phenotypic
 equilibrium.
\end{enumerate}

We turn our attention to the reversible model
\begin{linenomath*}
\begin{equation}
 \left\{
   \begin{aligned}
   \frac{dx_0}{dt} &= A_4x_0+A_5x_1+\gamma_2\\
   \frac{dx_1}{dt} &= A_6x_0+A_7x_1+\gamma_3\\
    \end{aligned}
   \right.
   \label{Linear3}
\end{equation}
\end{linenomath*}
For the phenotypic equilibrium, we have the following theorem:
\newtheorem{theorem}{Theorem}
\begin{theorem}
There exists unique stable fixed point $E=(x^*_0, x^*_1)$ in Eq. (\ref{Linear3}), where
$$x^*_0=\frac{A_5\gamma_3-A_7\gamma_2}{A_4A_7-A_5A_6}>0,\qquad x^*_1=\frac{A_6\gamma_2-A_4\gamma_3}{A_4A_7-A_5A_6}>0.$$
\label{Thm1}
\end{theorem}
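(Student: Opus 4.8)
The plan is to treat Eq.~(\ref{Linear3}) as an inhomogeneous linear system $\dot{\vec{x}}=M\vec{x}+\vec{b}$ with $M=\left(\begin{smallmatrix} A_4 & A_5 \\ A_6 & A_7 \end{smallmatrix}\right)$ and $\vec{b}=(\gamma_2,\gamma_3)^T$, and to reduce every assertion (existence, uniqueness, the explicit formulas, stability, and positivity) to sign conditions on $\mathrm{tr}\,M$, $\det M$, and two $2\times2$ determinants. The crucial preliminary step is to rewrite $A_4,\dots,A_7$ using the special-case constraint $A_1=A_2=A_3$, i.e. $\alpha_1+\alpha_2+\alpha_3=\beta_1-\beta_4=\gamma_1-\gamma_4$. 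Substituting this into the definitions collapses the common growth rate and yields the clean forms $A_4=-(\alpha_2+\alpha_3+\gamma_2)$, $A_5=\beta_2-\gamma_2$, $A_6=\alpha_2-\gamma_3$, and $A_7=-(\beta_2+\beta_3+\gamma_3)$. This simplification is what makes all the subsequent signs transparent.

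For existence, uniqueness and the formulas, I would compute $\det M=A_4A_7-A_5A_6$ directly; after expansion the terms $\alpha_2\beta_2$ and $\gamma_2\gamma_3$ cancel and what remains is a sum of products of nonnegative rate parameters containing the strictly positive plasticity product $\beta_2\gamma_2$, so $\det M>0$. Hence $M$ is invertible, the fixed point $\vec{x}^*=-M^{-1}\vec{b}$ is unique, and Cramer's rule reproduces exactly the stated $x_0^*$ and $x_1^*$.

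Stability then follows by shifting to the fixed point: setting $\vec{y}=\vec{x}-\vec{x}^*$ gives the homogeneous system $\dot{\vec{y}}=M\vec{y}$, whose (global, since the system is linear) asymptotic stability is equivalent, by the trace--determinant criterion for $2\times2$ matrices, to $\mathrm{tr}\,M<0$ and $\det M>0$. The determinant is already handled, and the trace is $A_4+A_7=-(\alpha_2+\alpha_3+\gamma_2+\beta_2+\beta_3+\gamma_3)<0$ by inspection. For positivity I would expand the two numerators and observe the same pattern of cancellation: $A_5\gamma_3-A_7\gamma_2=\beta_2(\gamma_2+\gamma_3)+\beta_3\gamma_2$ and $A_6\gamma_2-A_4\gamma_3=\alpha_2(\gamma_2+\gamma_3)+\alpha_3\gamma_3$, both strictly positive; dividing by the positive determinant gives $x_0^*,x_1^*>0$.

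The only real obstacle is algebraic rather than conceptual: one must carry out the three expansions (the determinant and the two numerators) carefully and confirm that the cross terms cancel so that each result is a manifestly nonnegative sum. The conceptual point to keep straight is the sign convention on the parameters --- the plasticity rates $\beta_2,\gamma_2,\beta_3,\gamma_3$ and the differentiation rates $\alpha_2,\alpha_3$ must be nonnegative (with the plasticity rates strictly positive in the reversible model), since it is precisely these positivity assumptions that upgrade the weak inequalities ``$\ge 0$'' to the strict inequalities claimed in the theorem.
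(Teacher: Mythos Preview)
Your proposal is correct and follows essentially the same route as the paper: both simplify $A_4,\dots,A_7$ via the constraint $A_1=A_2=A_3$, solve the linear fixed-point equations, shift to the equilibrium, and conclude stability from $\mathrm{tr}\,M<0$, $\det M>0$ (the paper phrases this as $\lambda_1+\lambda_2<0$, $\lambda_1\lambda_2>0$). Your version is in fact more explicit where the paper is terse---the paper merely asserts the chain $A_4A_7-A_5A_6>A_5\gamma_3-A_7\gamma_2>0$ (and similarly for the other numerator) without expansion, whereas you carry out the cancellations and exhibit the numerators as $\beta_2(\gamma_2+\gamma_3)+\beta_3\gamma_2$ and $\alpha_2(\gamma_2+\gamma_3)+\alpha_3\gamma_3$, which makes the strict positivity transparent.
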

The proof is put in \ref{appendixA}. Theorem \ref{Thm1} indicates that the phenotypic equilibrium
naturally holds in the special reversible model. We now consider whether this model
can perform overshoot. The main results are listed in Table \ref{table1} (see \ref{appendixB} for
more details).
In particular,
Fig. 2 illustrates the predictions by case 1 ($\Delta>0$):
\begin{itemize}
  \item \emph{Non-overshoot (left panel)}. Both $x_0$ and $x_1$ are monotonic functions.
  \item \emph{Asynchronous overshoot (middle panel)}. Either $x_0$ or $x_1$ can perform overshoot (The figure shows an example of
  $x_0$-overshoot), but they cannot perform overshoot simultaneously.
  \item \emph{Synchronous overshoot (right panel)}. $x_0$ and $x_1$ can perform overshoot simultaneously.
\end{itemize}
Case 2 ($\Delta=0$ and $A$ is diagonalizable) is the \emph{star} case (see \ref{appendixB}), overshoot can never happen.
For case 3 ($\Delta=0$ and $A$ is not diagonalizable), it belongs to the \emph{nodal} case (see \ref{appendixB}),
both asynchronous and synchronous overshoots can happen. Besides, in case 4 ($\Delta<0$), the model performs
damped oscillatory dynamics, corresponding to an interesting \emph{oscillating overshoot} (Fig. 3).

According to the above comparisons between Eqs. (\ref{Linearhierarchicalmodel}) and (\ref{Linear3}),
the reversible model shows richer dynamics than the hierarchical model in performing both the phenotypic
equilibrium and overshoot.

\begin{table}\footnotesize
\caption{The overshoot predicted by the special reversible model}
\begin{tabular}{|c|c|c|}
  \hline
  \hline
  & \textbf{Parameters} & \textbf{Behavior}\\
  \hline
  1 & $\Delta>0$ & no overshoot, asynchronous and synchronous overshoots\\
  \hline
2 & $\Delta=0$ and $A$ is diagonalizable  & no overshoot\\
  \hline
3 & $\Delta=0$ and $A$ is not diagonalizable  & asynchronous and synchronous overshoots\\
  \hline
4 & $\Delta<0$ & oscillating overshoot\\
  \hline
  \hline
\end{tabular}
\label{table1}
\end{table}

\begin{figure}
\begin{center}
\includegraphics[width=1\textwidth]{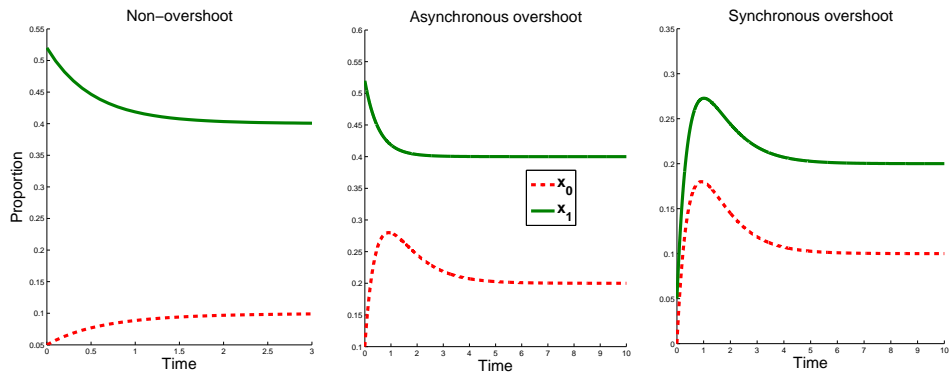}
\caption{Three types of predictions by the special reversible model in case 1 ($\Delta>0$).}
\end{center}
\end{figure}

\begin{figure}
\begin{center}
\includegraphics[width=1\textwidth]{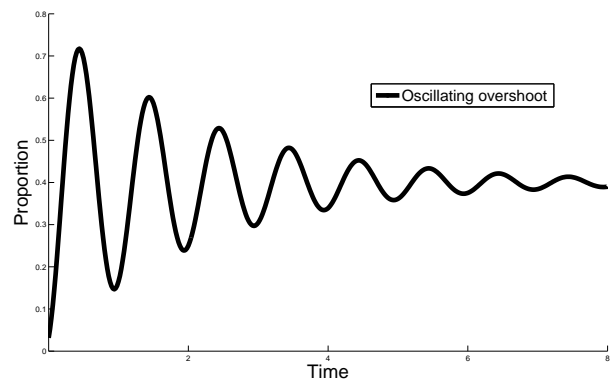}
\caption{Oscillating overshoot: Compared to the simple uphill-downhill overshoot in Fig 1,
oscillating overshoot shows multiple uphill-downhill movements with damped amplitude.}
\end{center}
\end{figure}

\subsection{General case}

We now remove the constraint of $A_1=A_2=A_3$ and further consider the general case in this section.
For the general hierarchical model Eq. (\ref{Nonlinear3}), the following theorem
states the condition under which the phenotypic mixture will finally stabilize:
\begin{theorem}
There exist three singular points in Eq. (\ref{Nonlinear3}): $E_0'=(0,0)$, $E_1'=(0,1)$ and $E_2'=(x_0^{*}, x_1^{*})$,
where
\begin{linenomath*}
$$x_1^{*}=\frac{A_4'A_6'}{(A_4'-A_7')(A_1'-A_3')+A_6'A_7'},\qquad x_0^{*}=\frac{(A_4'-A_7')x_1^{*}}{A_6'}.$$
Assume that $\alpha_1>\beta_1-\beta_4$ and $\alpha_1>\gamma_1-\gamma_4$, $E_2'=(x_0^{*}, x_1^{*})$ is the unique stable fixed point.
\end{linenomath*}
\label{Thm2}
\end{theorem}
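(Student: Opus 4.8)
The plan is to regard Eq.~(\ref{Nonlinear3}) as a planar autonomous system on the admissible triangle $\{(x_0,x_1):x_0\ge0,\ x_1\ge0,\ x_0+x_1\le1\}$ and to settle stability by the standard trace--determinant test after locating and sign-checking the three equilibria. First I would recover the fixed points. Since $A_5'=0$, the right-hand side of the first equation factors as $x_0\big[-(A_1-A_3)x_0-(A_2-A_3)x_1+A_4'\big]$, so the $x_0$-nullcline splits into the line $x_0=0$ and the line $\ell:(A_1-A_3)x_0+(A_2-A_3)x_1=A_4'$. On $x_0=0$ the second equation collapses to $x_1\big[-(A_2-A_3)x_1+A_7'\big]=0$; because $A_7'=A_2-A_3$ this forces $x_1=0$ or $x_1=1$, giving $E_0'=(0,0)$ and $E_1'=(0,1)$. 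Intersecting $\ell$ with the second nullcline yields the interior point $E_2'=(x_0^{*},x_1^{*})$ with the stated coordinates. I would keep the equilibrium identity $(A_1-A_3)x_0^{*}+(A_2-A_3)x_1^{*}=A_4'$ on hand, as it is what collapses the Jacobian at $E_2'$ later.

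Next I would confirm positivity of $E_2'$ from the hypotheses $\alpha_1>\beta_1-\beta_4$ and $\alpha_1>\gamma_1-\gamma_4$, which read $\alpha_1>A_2$ and $\alpha_1>A_3$. Writing $p:=A_4'=\alpha_1-A_3$ and $q:=A_7'=A_2-A_3$, these give $p>0$ and $A_4'-A_7'=\alpha_1-A_2>0$. The denominator $D=(A_4'-A_7')(A_1-A_3)+A_6'A_7'$ of $x_1^{*}$ can be rewritten using $A_1-A_3=p+\alpha_2+\alpha_3$ as a manifestly positive combination of $p$, $p-q$, $\alpha_2$ and $\alpha_3$, so $D>0$ and hence $x_1^{*}>0$; then $x_0^{*}=(A_4'-A_7')x_1^{*}/A_6'>0$ follows at once.

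Then I would linearize. At the two boundary points the Jacobian is triangular, so its eigenvalues are read off: at $E_0'$ they are $A_4'=\alpha_1-A_3>0$ and $A_7'$, and at $E_1'$ they are $\alpha_1-A_2>0$ and $-(A_2-A_3)$. In both cases a positive eigenvalue appears under the hypotheses, so $E_0'$ and $E_1'$ are unstable. At $E_2'$ I would substitute the equilibrium identity into the diagonal entries, which gives $\mathrm{tr}\,J=-2(A_1-A_3)x_0^{*}-2(A_2-A_3)x_1^{*}+A_7'=-2A_4'+A_7'=-2p+q<0$, the last inequality because $q<p<2p$. A parallel computation of $\det J$, again using the same identity, reduces it to $x_0^{*}\cdot D>0$. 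With $\mathrm{tr}\,J<0$ and $\det J>0$, $E_2'$ is locally asymptotically stable; being the only one of the three equilibria that is not unstable, it is the unique stable fixed point.

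The step I expect to be the main obstacle is the algebra at $E_2'$: exploiting the equilibrium relation so that the trace collapses exactly to $-2A_4'+A_7'$ and the determinant factors precisely as $x_0^{*}D$, together with the sign bookkeeping showing $D>0$ even when $q=A_2-A_3<0$, where one must use $A_1-A_3=p+\alpha_2+\alpha_3$ rather than reading off the sign of $D$ term by term. Once these two simplifications are secured, the trace--determinant conclusion and the instability of the boundary equilibria are routine.
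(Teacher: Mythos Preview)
Your proposal is correct and follows essentially the same route as the paper: factor the $x_0$-equation via $A_5'=0$ to locate $E_0',E_1',E_2'$, read off a positive eigenvalue at each boundary equilibrium, and at $E_2'$ use the equilibrium relation $(A_1-A_3)x_0^{*}+(A_2-A_3)x_1^{*}=A_4'$ to reduce the trace to $A_7'-2A_4'$ and the determinant to $A_4'(A_4'-A_7')$ (your $x_0^{*}D$ is exactly this, since $x_0^{*}=A_4'(A_4'-A_7')/D$). The only cosmetic difference is that the paper also computes the discriminant $(\mathrm{tr})^2-4\det=A_7'^{\,2}>0$ to identify $E_2'$ as a node rather than a focus, a refinement not needed for the stability statement itself.
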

The proof is put in \ref{appendixC}. Note that $\alpha_1$,
$\beta_1-\beta_4$ and $\gamma_1-\gamma_4$ are the diagonal
elements of the matrix $Q^*$ in Eq. (\ref{Matrix2}), representing
the per capita growth rates due to the corresponding cell phenotypes themselves.
Theorem \ref{Thm2} thus indicates that the general hierarchical model is capable of predicting
the phenotypic equilibrium only provided that the self-contributed growth rate by
CSCs is larger than that of more committed cancer cells. However,
this condition may not be satisfied in reality. It has been reported that
more committed cancer cells can have faster cycling time than stem-like cancer cells
in some cancers \cite{patrawala2005side,fillmore2008human,wang2014dynamics}.
In contrast, the following theorem shows the phenotypic equilibrium
of the general reversible model Eq. (\ref{Nonlinear1}):
\begin{theorem}
The reversible model Eq. (\ref{Nonlinear1})
has unique positive stable fixed point
\footnote{We should point out that, to complete the final proof,
technically it is necessary to add a small perturbation to the initial state
in few cases, see  \ref{appendixE}.}.
\label{Thm3}
\end{theorem}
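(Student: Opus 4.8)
The plan is to exploit the sign structure of the matrix $Q$ in Eq. (\ref{Matrix1}) together with the observation that Eq. (\ref{Nonlinear1}) is simply the projection of the linear flow $d\overrightarrow{X_t}/dt = Q\overrightarrow{X_t}$ onto the probability simplex. First I would note that every off-diagonal entry of $Q$ is a transition rate and hence nonnegative, so $Q$ is a Metzler (essentially nonnegative) matrix; once the plasticity rates genuinely couple all three phenotypes, $Q$ is moreover irreducible. By the Perron--Frobenius theorem for irreducible Metzler matrices, the spectral abscissa $\lambda_1$ is a simple real eigenvalue possessing a strictly positive right eigenvector $v=(v_0,v_1,v_2)^T$ and a strictly positive left eigenvector $w$, and $\lambda_1 > \mathrm{Re}\,\lambda_j$ for every other eigenvalue $\lambda_j$.

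Next I would transfer this spectral information to the proportion dynamics. Writing $u=\overrightarrow{X_t}/W$ with $W=X_0+X_1+X_2$, the three lines of Eq. (\ref{Nonlinear1}) read compactly as $du/dt = Qu - (A_1x_0+A_2x_1+A_3x_2)\,u$, since $A_1x_0+A_2x_1+A_3x_2=(dW/dt)/W$ by Eq. (\ref{total}). An interior fixed point $u^*>0$ must therefore satisfy $Qu^*=\lambda u^*$ with $\lambda=A_1x_0^*+A_2x_1^*+A_3x_2^*$, i.e. it is a strictly positive eigenvector of $Q$. Since $v$ is, up to scaling, the \emph{only} eigenvector of $Q$ with all positive entries, the normalized vector $E=v/(v_0+v_1+v_2)$ is the unique fixed point of Eq. (\ref{Nonlinear1}) lying in the open simplex, and its coordinates are automatically positive. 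This settles existence, positivity and uniqueness simultaneously, and it is consistent with Theorem \ref{Thm1}: when $A_1=A_2=A_3$ all column sums of $Q$ equal $\kappa$, so $(1,1,1)^T$ is the positive left eigenvector and $\lambda_1=\kappa$, which is exactly the linear reduction Eq. (\ref{Linear3}).

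For stability I would use the spectral decomposition of $Q$ to write the linear solution as $\overrightarrow{X_t}=c_1 e^{\lambda_1 t} v + r(t)$, where $r(t)$ collects the remaining eigenmodes and satisfies $\|r(t)\|=o(e^{\lambda_1 t})$ because $\mathrm{Re}\,\lambda_j<\lambda_1$ for $j\neq1$ (even when $Q$ is not diagonalizable, the polynomial factors from Jordan blocks are dominated by the strict spectral gap). For any nonnegative, nonzero initial population vector $\overrightarrow{X}(0)$ the leading coefficient equals $c_1=(w^T\overrightarrow{X}(0))/(w^Tv)>0$, the positivity coming from $w>0$. Hence $\overrightarrow{X_t}\sim c_1 e^{\lambda_1 t} v$ and, after normalization, $u(t)=\overrightarrow{X_t}/\|\overrightarrow{X_t}\|_1\to v/\|v\|_1=E$ exponentially fast, so $E$ is globally asymptotically stable on the simplex. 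This is the same projective mechanism that governs the four regimes of Table \ref{table1}: subdominant complex eigenvalues merely produce the damped oscillations of the oscillating overshoot and never disturb the limit.

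The main obstacle, and the reason for the perturbation flagged in the footnote, is that this clean dichotomy relies on $Q$ being irreducible with a strictly dominant Perron eigenvalue. When some plasticity rates vanish, the communication graph of $Q$ splits into blocks, the positive eigenvector can collapse onto a face of the simplex, and $c_1$ may fail to be positive for initial data supported on an invariant subspace. In those few degenerate configurations I would restore genericity by adding an arbitrarily small positive perturbation to $\overrightarrow{X}(0)$ (see \ref{appendixE}), which forces a strictly positive overlap with the left Perron vector and recovers convergence to $E$. Checking irreducibility directly from the biological transition rates and accounting for the excluded measure-zero set of initial conditions is therefore where the genuine work lies; the spectral estimates themselves are routine.
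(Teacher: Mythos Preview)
Your argument follows the same Perron--Frobenius route as the paper's proof in \ref{appendixE}: both of you lift back to the linear system $d\overrightarrow{X_t}/dt=Q\overrightarrow{X_t}$, invoke the Perron eigenpair of $Q$, and show that the normalized trajectory $\overrightarrow{X_t}/W$ converges to the normalized Perron eigenvector, which is the unique strictly positive fixed point of Eq.~(\ref{Nonlinear1}). The paper replaces your Metzler/irreducibility observation by the shift $Q\mapsto Q+\tau I$ to obtain a strictly positive matrix, but the spectral conclusions are identical.

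The one substantive difference is how the leading coefficient $c_1$ is handled. You compute $c_1=(w^T\overrightarrow{X}(0))/(w^Tv)$ using the strictly positive left Perron eigenvector $w$, so $c_1>0$ for every nonnegative nonzero initial state and no perturbation is ever needed once $Q$ is irreducible. The paper does \emph{not} use the left eigenvector; it writes $c_{1,1}$ via Cramer's rule and then, unable to exclude $c_{1,1}=0$ a priori, appeals to a small perturbation of $\overrightarrow{X}(0)$ to force $c_{1,1}\neq0$. Thus your reading of the footnote (perturbation is for reducible $Q$) does not match the paper's actual use (perturbation is for the measure-zero event $c_{1,1}=0$ even when $Q$ is positive); but your left-eigenvector argument in fact removes the need for that perturbation altogether in the irreducible case, which is a genuine sharpening of the paper's proof rather than a gap.
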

The proof is put in \ref{appendixE}. Theorem \ref{Thm3} shows that
the result in the special case (Theorem \ref{Thm1}) can be extended to the general case.
By comparison of Theorems \ref{Thm2} and \ref{Thm3}, we can see that it is more likely
for the reversible model to correctly capture the phenotypic equilibrium.

In the remaining of this section, we discuss the overshoot performances of the two models by fitting them
to the cell-state dynamics in \cite{gupta2011stochastic}. We fitted Eqs. (\ref{Nonlinear3})
and (\ref{Nonlinear1}) to the SUM159 data shown in the Fig. 3
of \cite{gupta2011stochastic}. From Fig. 4,
we can see that the reversible model fits the data better than
the hierarchical model. In particular, note that both the overshoots of luminal cells and stem-like cells
were presented in the SUM159 data (see the left panel and right panel in the Fig. 3 of \cite{gupta2011stochastic}
respectively). Our result shows that, the reversible model can capture both of them, whereas the hierarchical
model can only fit the overshoot of luminal cells but it fails to capture the overshoot of stem-like cells.
In the left panel of Fig. 4, both the blue solid line (hierarchical model) and blue
discrete triangle-marker line (reversible model) fit the overshoot of luminal cells. Meanwhile, in
the right panel of Fig. 4, the red solid line (hierarchical model) shows a monotonic way up to the final equilibrium (non-overshoot),
but the red discrete triangle-marker line (reversible model) predict the overshoot of stem-like cells.

Actually, it has already been reported in previous literature that the overshoot of NSCCs proportion
can happen in classical CSC model \cite{weekes2014multicompartment}, but the
overshoot of CSCs proportion we think has added significance.
Note that in the hierarchical model the increase of CSCs relies only on their self-renewals.
When the initial fraction of CSCs is very limited, the transient increase
of CSCs proportion should be very slow (constraint by the limitation of cell division cycle),
so intuitively it is quite unlikely for the hierarchical model to capture the overshoot of
CSCs proportion. In contrast, the de-differentiation from NSCCs can effectively speed up
the accumulation of CSCs, it is more likely for the reversible model to capture the overshoot
of CSCs proportion. In other words, the overshoot of CSCs should be a better candidate than that of NSCCs to
characterize the phenotypic plasticity. Our numerical result just supports this idea.

\

\begin{figure}
\begin{center}
\includegraphics[width=1\textwidth]{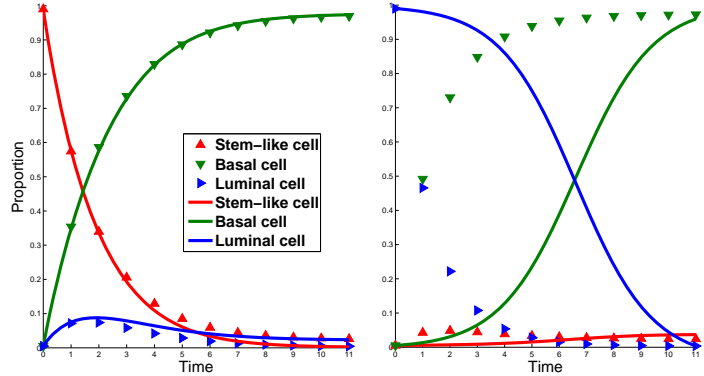}
\caption{Predictions of the general hierarchical model and reversible model. We used least square method to fit the two models
to the SUM159 data in \cite{gupta2011stochastic}. The left panel shows the purified stem-like cell case, where the initial states of stem-like cells, basal cells and luminal cells were assumed to be 99\%, 0.5\% and 0.5\% respectively. The right panel shows the purified luminal cell case, where the initial states of stem-like cells, basal cells and luminal cells were assumed to be 0.5\%, 0.5\% and 99\% respectively. In both panels, the solid lines are the predictions by the hierarchical model, and the discrete triangle-marker lines are the predictions by the reversible model.}
\end{center}
\end{figure}

\section{Conclusions}
\label{Conclusions}

In this paper, we have tried to characterize the phenotypic plasticity by giving a
comparative study between the reversible model and hierarchical model.
According to our results, the reversible model shows richer dynamics and better
predictions to experimental phenomena than the hierarchical model.

It should be noted that, to focus our attention to the reversible phenotypic
plasticity, the presented model does not include the biologically complex
mechanisms such as feedback controls and time delays.  It is undeniable that,
by including the feedback and delay mechanisms, the hierarchical model can
also show complex dynamics that are beyond the reach of the presented model
\cite{bernard2004bifurcations,stiehl2014impact,manesso2013dynamical,lander2009cell,liu2013nonlinear}.
For example, Bernard \emph{et al} showed that oscillations can occur in their hierarchical model
with feedbacks and delays \cite{bernard2004bifurcations}. Stiehl \emph{et al}
showed that the hierarchical model of hematopoiesis with nonlinear feedback can reproduce the
overshoot and damped oscillations observed in clinical data \cite{stiehl2014impact}.
We also think that the interplay between the feedback (or delay) mechanisms and the
phenotypic plasticity should be a very interesting issue.
As a staring point, it is suggested by our presented model that,
the phenotypic plasticity facilitates the heterogeneity of cancer in two ways:
it helps to sustain the coexistence of multiple phenotypes in cancer, and it also
accelerates the recovery process of cancer stem cells in the population.
These findings may shed some lights on the researches of the models
incorporating the biological complexities with the phenotypic plasticity.

Moreover, the presented model is deterministic, whereby stochastic effects are not
accounted for. Therefore, the branching model \cite{haccou2005branching}
concerning both the stochastic phenotypic conversions and proliferations of cancer cells
is another interesting research direction.
Finally, it is worth noting that, in addition to the long-term stabilization,
the short-term transient dynamics also deserve special attention
in the study of biological population dynamics \cite{hastings2004transients}.
In this work we have seen the significance of the overshoot in characterizing
the phenotypic plasticity. It is interesting to explore more types of transient dynamics
and their biological significance in future researches.

\section*{Acknowledgements}
X. C. and X. Z. acknowledge the support by NSFC (No. 11371161).
M. Y. acknowledges NSFC (Nos. 11275259 and 91330113).
D. Z. acknowledges the generous sponsorship from the National Natural
Sciences Foundation of China (No. 11401499), the Natural Science Foundation
of Fujian Province of China (No. 2015J05016),and the
Fundamental Research Funds for the Central Universities in China (No. 20720140524).

\appendix

\section{Proof of Theorem \ref{Thm1}}
\label{appendixA}

\newtheorem*{theorem*}{Theorem}
\begin{theorem*}[page \pageref{Thm1}, Theorem \ref{Thm1}]
There exists unique stable fixed point $E=(x^*_0, x^*_1)$ in Eq. (\ref{Linear3}), where
$$x^*_0=\frac{A_5\gamma_3-A_7\gamma_2}{A_4A_7-A_5A_6}>0,\qquad x^*_1=\frac{A_6\gamma_2-A_4\gamma_3}{A_4A_7-A_5A_6}>0.$$
\end{theorem*}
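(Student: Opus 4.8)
The plan is to treat Eq. (\ref{Linear3}) as an inhomogeneous linear system $\dot{u} = A u + b$ with $u = (x_0, x_1)^T$, constant forcing $b = (\gamma_2, \gamma_3)^T$, and coefficient matrix $A = \begin{pmatrix} A_4 & A_5 \\ A_6 & A_7 \end{pmatrix}$. A fixed point solves $Au = -b$, so provided $A$ is invertible there is exactly one equilibrium, given by Cramer's rule; expanding the two $2\times 2$ determinants reproduces precisely the stated formulas for $x_0^*$ and $x_1^*$, with common denominator $\det A = A_4 A_7 - A_5 A_6$. Thus existence and uniqueness reduce to verifying $\det A \neq 0$.

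For stability I would invoke the Routh--Hurwitz criterion for a $2\times 2$ system: since the forcing $b$ only translates the equilibrium, the (global) asymptotic stability of $E$ is equivalent to both eigenvalues of $A$ having negative real part, which holds iff $\mathrm{tr}(A) = A_4 + A_7 < 0$ and $\det A = A_4 A_7 - A_5 A_6 > 0$. So the entire theorem comes down to two sign computations on $\det A$ and $\mathrm{tr}(A)$, together with the signs of the two numerators.

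The key step is to insert the special-case constraint $A_1 = A_2 = A_3$, i.e. $\alpha_1 + \alpha_2 + \alpha_3 = \beta_1 - \beta_4 = \gamma_1 - \gamma_4$, into the definitions of $A_4,\dots,A_7$. This collapses them to $A_4 = -(\alpha_2 + \alpha_3 + \gamma_2)$, $A_7 = -(\beta_2 + \beta_3 + \gamma_3)$, $A_5 = \beta_2 - \gamma_2$, $A_6 = \alpha_2 - \gamma_3$; in particular $\mathrm{tr}(A) < 0$ is immediate. Substituting into $\det A$ and into the numerators $A_5\gamma_3 - A_7\gamma_2$ and $A_6\gamma_2 - A_4\gamma_3$, I expect the cross terms to cancel and leave sums of products of nonnegative rate constants (for instance the numerator of $x_0^*$ should reduce to $\beta_2(\gamma_2+\gamma_3) + \beta_3\gamma_2$, and that of $x_1^*$ to $\alpha_2(\gamma_2+\gamma_3)+\alpha_3\gamma_3$), so each is $\geq 0$ and strictly positive once the plasticity rates are positive --- which is exactly the hypothesis distinguishing the reversible model.

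The main obstacle is bookkeeping rather than conceptual: one must carry out the expansions carefully so that the many mixed terms cancel and the residual expressions are manifestly sums of nonnegative monomials. A secondary point to state explicitly is the genericity needed for strict positivity of $\det A$ and of the two numerators (equivalently, for $E$ to lie in the open positive quadrant); this is guaranteed by taking the plasticity parameters $\beta_2, \beta_3, \gamma_2, \gamma_3$ positive, consistent with the reversible regime, which simultaneously secures invertibility of $A$ and hence the uniqueness claimed at the outset.
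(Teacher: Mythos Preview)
Your proposal is correct and follows essentially the same route as the paper: it too reduces $A_4,\dots,A_7$ under the constraint $A_1=A_2=A_3$ to $A_4=-(\alpha_2+\alpha_3+\gamma_2)$, $A_7=-(\beta_2+\beta_3+\gamma_3)$, $A_5=\beta_2-\gamma_2$, $A_6=\alpha_2-\gamma_3$, solves the linear fixed-point equations, and establishes stability via $\mathrm{tr}(A)<0$ and $\det A>0$ (the paper phrases this last step as a translation to the origin followed by Vieta's formulas on the characteristic polynomial, which is the same thing). Your explicit expansions of the numerators as $\beta_2(\gamma_2+\gamma_3)+\beta_3\gamma_2$ and $\alpha_2(\gamma_2+\gamma_3)+\alpha_3\gamma_3$ are correct and in fact more transparent than the paper's bare assertion of the inequalities; the paper additionally records $A_4A_7-A_5A_6 > A_5\gamma_3-A_7\gamma_2$ (and the analogous bound for $x_1^*$), which yields $x_0^*,x_1^*<1$, but this is not part of the stated theorem.
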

\begin{proof}
Let
\begin{linenomath*}
\begin{equation}
 \left\{
   \begin{array}{l}
   A_4x_0+A_5x_1+\gamma_2=0\\
    A_6x_0+A_7x_1+\gamma_3=0,\\
    \end{array}
   \right.
\end{equation}
\end{linenomath*}
where $A_4=-(\alpha_2+\alpha_3+\gamma_2)<0$, $A_7=-(\beta_2+\beta_3+\gamma_3)<0$,
$A_5=\beta_2-\gamma_2$, $A_6=\alpha_2-\gamma_3$.
Note that
\begin{linenomath*}
$$ A_4A_7-A_5A_6>A_5\gamma_3-A_7\gamma_2>0,\qquad A_4A_7-A_5A_6>A_6\gamma_2-A_4\gamma_3>0,$$
\end{linenomath*}
it is easy to see that $E=(x^*_0, x^*_1)$ is the only fixed point.
To proof the stability of $E$, let
$x_0=w_0+x^*_0$,~$x_1=w_1+x^*_1$, then Eq. (\ref{Linear3}) is transformed to
\begin{linenomath*}
\begin{equation}
 \left\{
   \begin{aligned}
  \frac{dw_0}{dt} &= A_4w_0+A_5w_1\\
  \frac{dw_1}{dt} &= A_6w_0+A_7w_1\\
    \end{aligned}
   \right.
   \label{Linear4}
\end{equation}
\end{linenomath*}
Let
$\vec{w}=(w_0,w_1)^T$ and
\begin{linenomath*}
\begin{equation}
A=\left(\begin{array}{cc}
A_4 & A_5   \\
A_6 & A_7  \\
\end{array}\right),
\label{Matrix3}
\end{equation}
\end{linenomath*}
Eq. (\ref{Linear4}) can be expressed as
\begin{linenomath*}
$$\frac{d\vec{w}}{dt} = A\vec{w}.$$
\end{linenomath*}
$E'=(0,0)$ becomes the only fixed point and its characteristic equation
is
\begin{linenomath*}
\begin{equation}
\lambda^2-(A_4+A_7)\lambda+A_4A_7-A_5A_6=0.
\label{char}
\end{equation}
\end{linenomath*}
Since
\begin{linenomath*}
$$\lambda_1+\lambda_2=A_4+A_7<0,\qquad \lambda_1\lambda_2=A_4A_7-A_5A_6>0,$$
\end{linenomath*}
the real parts of $\lambda_1$ and $\lambda_2$ are both negative.
Hence $E'$ is stable in Eq. (\ref{Linear4}),
and then $E$ is the unique stable fixed point of Eq. (\ref{Linear3}).
The proof is completed.
\end{proof}

\section{The overshoot predicted by the special reversible model}
\label{appendixB}

Let
\begin{equation}
\Delta=(A_4-A_7)^2+4A_5A_6
\label{delta}
\end{equation}
be the discriminant of (\ref{char}).
Since it has been shown that the real parts of
$\lambda_1$ and $\lambda_2$ are negative,
the special reversible model Eq. (\ref{Linear3}) can be classified into the following four
cases (For the terminologies used below, please refer to Chapter 5 in \cite{birkhoff1969ordinary}):

\begin{figure}[!htb]
\begin{center}
\includegraphics[width=0.4\textwidth]{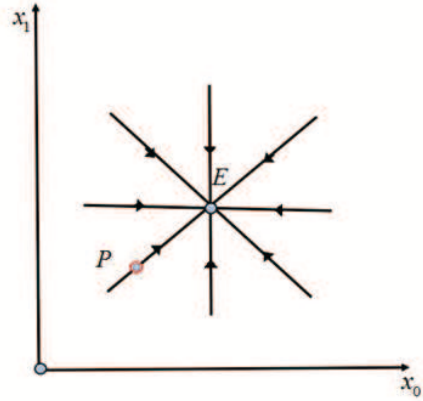}
\caption{The phase diagrammatic sketch of the case that $\Delta=0$ and $A$ is diagonalizable.}
\end{center}
\end{figure}

\begin{figure}[!htb]
\begin{center}
\subfigure[]{\includegraphics[width=0.4\textwidth]{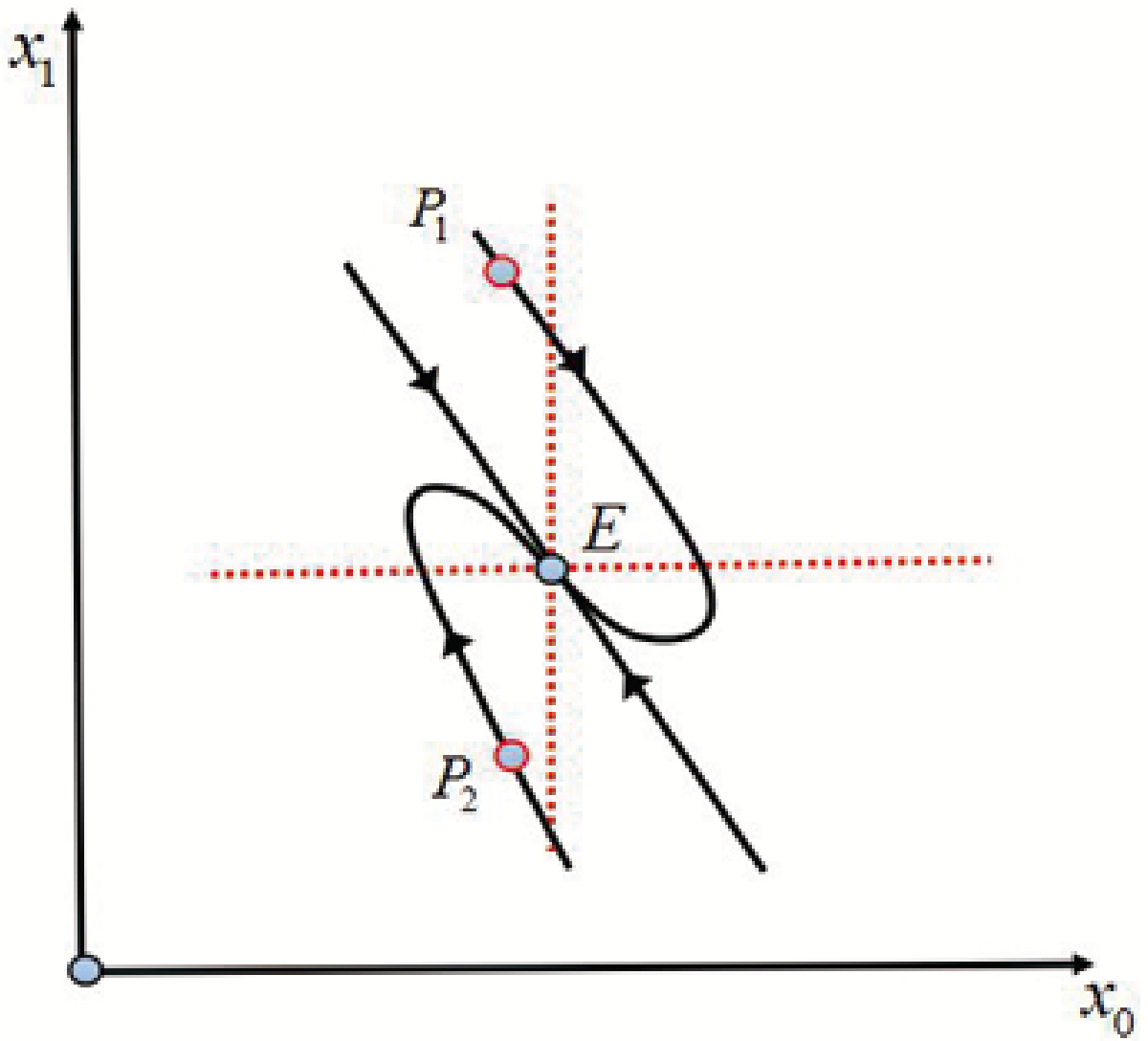}}
\subfigure[]{\includegraphics[width=0.4\textwidth]{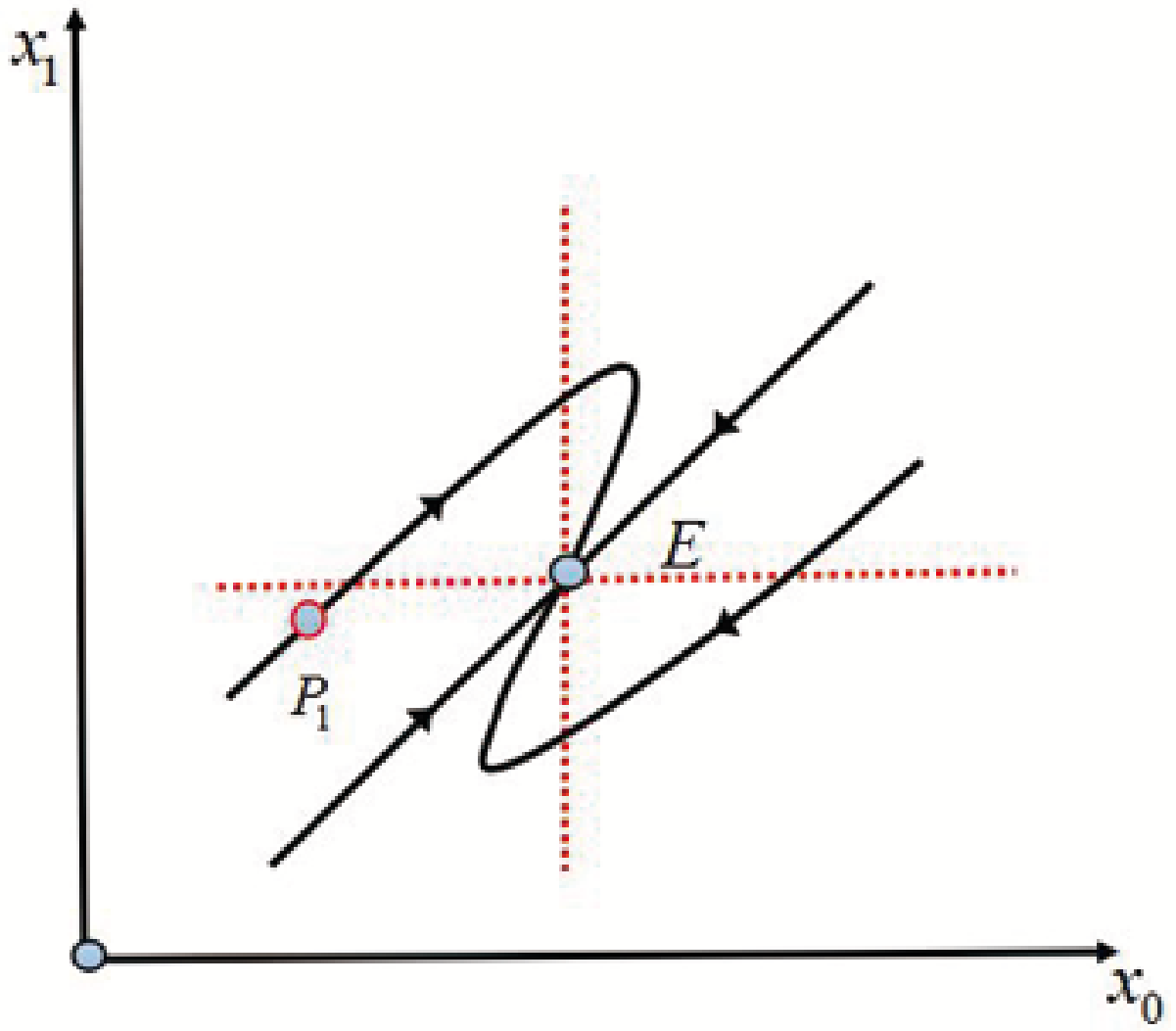}}
\caption{The phase diagrammatic sketch of the case that $\Delta=0$ and $A$ is not diagonalizable.
(a) Asynchronous overshoot: When starting from $P_1$,
$x_0$ can first exceed $x^*_0$ and then decrease to it (overshoot),
whereas $x_1$ cannot perform overshoot. However, when starting from $P_2$,
$x_1$ can perform overshoot, but $x_0$ cannot perform overshoot. That is, either $x_0$
or $x_1$ can perform overshoot, but they cannot perform overshoot simultaneously.
(b) Synchronous overshoot: When starting from $P_1$, both $x_0$ and $x_1$
can exceed $E$ and then decrease to it. That is, they can perform overshoot simultaneously.}
\end{center}
\end{figure}

\begin{enumerate}
  \item $\Delta>0$. In this case the matrix $A$ in Eq. (\ref{Matrix3})
  has two unequal negative eigenvalues. Thus $E'$
  is the \emph{nodal point} of Eq. (\ref{Linear4}), correspondingly $E$ is the \emph{nodal point}
  of Eq. (\ref{Linear3}). The solution can be expressed as
  \begin{linenomath*}
\begin{equation}
\vec{x}=P(e^{\lambda_1t}k_1,e^{\lambda_2t}k_2)^T,
\end{equation}
\label{nodal}
\end{linenomath*}
where
\begin{equation}
P=\left(\begin{array}{cc}
p_{11} & p_{12}   \\
p_{21} & p_{22}  \\
\end{array}\right)
\end{equation}
is a matrix determined by the coefficients of Eq. (\ref{Linear4}),
both $k_1$ and $k_2$ are determined by the initial states. Different types of transient dynamics can
happen for different parameter ranges. By numerical simulations, it is shown in Fig. 2 that
the model can perform non-overshoot, asynchronous overshoot and synchronous overshoot respectively.

  \item $\Delta=0$ and $A$ is diagonalizable. Since
  $E$ is the \emph{star point} in this case (Fig. B.5). There is no overshoot
  of $x_0$ and $x_1$.
  \item $\Delta=0$ and $A$ is not diagonalizable. In this case $E$
  is also called \emph{nodal point}. From Fig. B.6, we can see that
  the model can perform both asynchronous and synchronous overshoots.
  If $(A_4-A_7)/A_5>0$, the model can perform asynchronous overshoot (Fig. B.6(a)).
  If $(A_4-A_7)/A_5<0$, the model can perform synchronous overshoot (Fig. B.6(b)).
  \item $\Delta<0$. $A$ has two conjugate complex eigenvalues with
  negative real part. $E$ is the \emph{focal point}, resulting in
 oscillating overshoot (Fig. 3).
\end{enumerate}

\section{Proof of Theorem \ref{Thm2}}
\label{appendixC}

\begin{theorem*}[page \pageref{Thm2}, Theorem \ref{Thm2}]
There exist three singular points in Eq. (\ref{Nonlinear3}): $E_0'=(0,0)$, $E_1'=(0,1)$ and $E_2'=(x_0^{*}, x_1^{*})$,
where
\begin{linenomath*}
$$x_1^{*}=\frac{A_4'A_6'}{(A_4'-A_7')(A_1'-A_3')+A_6'A_7'},\qquad x_0^{*}=\frac{(A_4'-A_7')x_1^{*}}{A_6'}.$$
Assume that $\alpha_1>\beta_1-\beta_4$ and $\alpha_1>\gamma_1-\gamma_4$, $E_2'=(x_0^{*}, x_1^{*})$ is the unique stable fixed point.
\end{linenomath*}
\end{theorem*}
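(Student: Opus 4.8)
The plan is to treat Eq.~(\ref{Nonlinear3}) as a planar autonomous system, locate its three equilibria by direct algebra, and then read off stability from the linearization, exploiting the equilibrium relations to keep the computation clean. The starting observation is that $A_5'=0$, so the first equation factors as
\[
\dot x_0 = x_0\bigl[-(A_1-A_3)x_0-(A_2-A_3)x_1+A_4'\bigr].
\]
Thus the line $x_0=0$ is invariant, and every equilibrium either lies on it or satisfies $(A_1-A_3)x_0+(A_2-A_3)x_1=A_4'$. On $x_0=0$, using $A_7'=A_2-A_3$ the second equation reduces to $\dot x_1=(A_2-A_3)x_1(1-x_1)$, which gives exactly $E_0'=(0,0)$ and $E_1'=(0,1)$. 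For the interior branch I would substitute $(A_1-A_3)x_0+A_7'x_1=A_4'$ into the equation $\dot x_1=0$; the quadratic terms collapse and leave $A_6'x_0=(A_4'-A_7')x_1$, which is the stated $x_0^{*}$, and back-substitution recovers the stated $x_1^{*}$.

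Next I would pin down the relevant signs. The two hypotheses translate to $A_4'=\alpha_1-(\gamma_1-\gamma_4)>0$ and $A_4'-A_7'=\alpha_1-(\beta_1-\beta_4)>0$, and the identity $A_1-A_3=A_4'+\alpha_2+\alpha_3$ then gives $A_1-A_3>0$. Rewriting the denominator of $x_1^{*}$ as $A_4'(A_4'-A_7')+A_4'A_6'+\alpha_3(A_4'-A_7')$ exhibits it as a sum of nonnegative terms with a strictly positive leading term, hence positive; with $A_6'=\alpha_2>0$ this yields $x_1^{*}>0$ and therefore $x_0^{*}>0$. The same bookkeeping also shows $x_0^{*}+x_1^{*}\le 1$, the deficit being $\alpha_3(A_4'-A_7')$ over that denominator, so $E_2'$ genuinely sits in the biological simplex.

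The stability step is via the Jacobian. At $E_0'$ it is lower triangular with diagonal $(A_4',A_7')$, and $A_4'>0$ forces instability; at $E_1'$ the diagonal contains $A_4'-A_7'>0$, so it too is unstable. The crux is $E_2'$: instead of evaluating the messy entries directly, I would substitute the equilibrium identities $A_4'=(A_1-A_3)x_0^{*}+A_7'x_1^{*}$ and $A_6'x_0^{*}=(A_4'-A_7')x_1^{*}$ wherever they occur. The trace then telescopes to $-2A_4'+A_7'=-(A_4'-A_7')-A_4'<0$, and in the determinant the $x_1^{*}$ terms cancel, leaving $\det=x_0^{*}\bigl[(A_1-A_3)(A_4'-A_7')+A_6'A_7'\bigr]$, that is, $x_0^{*}$ times the positive denominator of $x_1^{*}$. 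Hence both eigenvalues have negative real part, $E_2'$ is locally asymptotically stable, and since the other two equilibria are unstable it is the unique stable fixed point.

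The main obstacle is sign control rather than any isolated hard computation: the raw Jacobian entries and the raw denominator all involve $A_7'$, whose sign is not fixed by the hypotheses. The essential device is to use the two equilibrium relations to eliminate $x_1^{*}$ and $A_4'$ from the trace and determinant, turning indefinite expressions into manifestly sign-definite ones, with the identity $A_1-A_3=A_4'+\alpha_2+\alpha_3$ playing the analogous role in the positivity argument. If one wanted global convergence to $E_2'$ rather than the local stability asserted, one would additionally need a phase-plane or Lyapunov argument confining trajectories to the invariant simplex; but the stated claim follows from the local analysis above.
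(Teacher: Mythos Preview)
Your proof is correct and follows essentially the same route as the paper: factor the $\dot x_0$ equation to locate the three equilibria, then linearize and check the signs of trace and determinant at each. Your trace $A_7'-2A_4'$ and determinant $x_0^{*}\bigl[(A_1-A_3)(A_4'-A_7')+A_6'A_7'\bigr]=A_4'(A_4'-A_7')$ match the paper's values exactly; your use of the equilibrium relations to simplify these, and your extra verification that $E_2'$ lies in the simplex, are a bit more explicit than the paper's own argument but not a different method.
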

\begin{proof}
To obtain the fixed points of Eq. (\ref{Nonlinear3}), we consider
\begin{linenomath*}
\begin{equation}
 \left\{
   \begin{array}{l}
   -(A_1'-A_3')x_0^2-(A_2'-A_3')x_0x_1+A_4'x_0=0\\
    -(A_2'-A_3')x_1^2-(A_1'-A_3')x_0x_1+A_6'x_0+A_7'x_1=0\\
    \end{array}
   \right.
\end{equation}
\end{linenomath*}
For the first equation
$-(A_1'-A_3')x_0^2-(A_2'-A_3')x_0x_1+A_4'x_0=0$,
we have either $x_0=0$ or
$-(A_1'-A_3')x_0-(A_2'-A_3')x_1+A_4'=0.$\\
\emph{(i)} When $x_0=0$, based on the second equation we have $x_1=0$ or
$x_1=A_7'/(A_2'-A_3')=1,$
that is, we have two fixed points
\begin{linenomath*}
         $$E_0'=(0,0),\qquad E_1'=(0,1).$$
\end{linenomath*}
\emph{(ii)} When $-(A_1'-A_3')x_0-(A_2'-A_3')x_1+A_4'=0,$
the fixed point is
\begin{linenomath*}
$$E_2'=(x_0^{*},x_1^{*}),$$
\end{linenomath*}
where
\begin{linenomath*}
$$x_1^{*}=\frac{A_4'A_6'}{(A_4'-A_7')(A_1'-A_3')+A_6'A_7'},\qquad x_0^{*}=\frac{(A_4'-A_7')x_1^{*}}{A_6'}.$$
\end{linenomath*}
Since
$\alpha_1>\beta_1-\beta_4$ and $\alpha_1>\gamma_1-\gamma_4$,
we have $0<x_0^{*}<1$ and $0<x_1^{*}<1.$
Let\begin{linenomath*}
$$f(x_0,x_1)=-(A_1'-A_3')x_0^2-(A_2'-A_3')x_0x_1+A_4'x_0,$$
$$g(x_0,x_1)=-(A_2'-A_3')x_1^2-(A_1'-A_3')x_0x_1+A_6'x_0+A_7'x_1,$$
then
$$f_{x_0}'=-2(A_1'-A_3')x_0-(A_2'-A_3')x_1+A_4',~~f_{x_1}'=-(A_2'-A_3')x_0,$$
$$g_{x_0}'=-(A_1'-A_3')x_1+A_6',~~g_{x_1}'=-2(A_2'-A_3')x_1-(A_1'-A_3')x_0+A_7'.$$
\end{linenomath*}
The Jacobian matrix of Eq. (\ref{Nonlinear3}) is given by
\begin{linenomath*}
$$J=\begin{pmatrix} f_{x_0}' & f_{x_1}' \\
g_{x_0}'& g_{x_1}' \\
\end{pmatrix}.$$
\end{linenomath*}
Let us first consider the stability of $E_0'=(0,0)$,
\begin{linenomath*}
$$J(E_0')=\begin{pmatrix} A_4' & 0 \\
A_6'& A_7' \\
\end{pmatrix},$$
\end{linenomath*}
its characteristic equation is
\begin{linenomath*}
$$(\lambda-A_4')(\lambda-A_7')=0.$$
Note that
$$\alpha_1>\beta_1-\beta_4,~~~\alpha_1>\gamma_1-\gamma_4,$$
\end{linenomath*}
then $\lambda_1=A_4'>0$, implying that $E_0'=(0,0)$ is unstable.

\begin{linenomath*}
Similarly, for $E_1'=(0,1)$
$$J(E_1')=\begin{pmatrix} A_4'-A_7' & 0 \\
A_6'-A_1'+A_3'& -A_7' \\
\end{pmatrix},$$
$$(\lambda-(A_4'-A_7'))(\lambda+A_7')=0,$$
\end{linenomath*}
$\lambda_1=A_4'-A_7'>0$, $E_1'=(0,1)$ is also unstable.

For $E_2'=(x_0^{*}, x_1^{*})$,
\begin{linenomath*}
$$J(E_2')=\begin{pmatrix} f_{x_0} & f_{x_1} \\
g_{x_0}& g_{x_1} \\
\end{pmatrix},$$
\end{linenomath*}
where
\begin{linenomath*}
$$f_{x_0}=f_{x_0}'(x_0^{*},x_1^{*}),\qquad f_{x_1}=f_{x_1}'(x_0^{*},x_1^{*}),\qquad g_{x_0}=g_{x_0}'(x_0^{*},x_1^{*}),\qquad g_{x_1}=g_{x_1}'(x_0^{*},x_1^{*}).$$
\end{linenomath*}
Then we have
\begin{linenomath*}
$$\lambda^2-(f_{x_0}+g_{x_1})\lambda+f_{x_0}g_{x_1}-f_{x_1}g_{x_0}=0,$$
$$\triangle=[-(\lambda_1+\lambda_2)]^{2}-4\lambda_1\lambda_2=A_7'^{2}>0,$$
$$\lambda_1+\lambda_2=A_7'-2A_4'<0,$$
$$\lambda_1\lambda_2=A_4'(A_4'-A_7')>0.$$
\end{linenomath*}
Then
\begin{linenomath*}
$$\lambda_1<0,\qquad \lambda_2<0.$$
\end{linenomath*}
It is shown that $E_2'$ is a stable nodal point, and then it is the unique stable fixed point in Eq. (\ref{Nonlinear3}).
The proof is completed.
\end{proof}

\section{Proof of Theorem \ref{Thm3}}
\label{appendixE}

\begin{theorem*}[page \pageref{Thm3}, Theorem \ref{Thm3}]
The reversible model Eq. (\ref{Nonlinear1})
has unique positive stable fixed point.
\end{theorem*}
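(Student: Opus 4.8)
The plan is to avoid working directly with the quadratic planar system Eq.~(\ref{Nonlinear2}) and instead lift the proportion dynamics back to the linear cell-number dynamics $d\overrightarrow{X_t}/dt = Q\,\overrightarrow{X_t}$ of Eq.~(\ref{Linear2}), where the structure of $Q$ can be exploited directly. Under the standing assumption that the plasticity and differentiation rates are positive, every off-diagonal entry of $Q$ in Eq.~(\ref{Matrix1}) is positive, so $Q$ is an irreducible essentially non-negative (Metzler) matrix: $Q+cI\ge 0$ entrywise for $c$ large enough, and the conversions link all three phenotypes. First I would invoke the Perron--Frobenius theorem in its Metzler form to obtain a simple, strictly dominant real eigenvalue $\lambda_1$, with $\mathrm{Re}\,\lambda_j<\lambda_1$ for every other eigenvalue $\lambda_j$, together with strictly positive right and left eigenvectors $v>0$ and $u>0$. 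The candidate interior fixed point of Eq.~(\ref{Nonlinear1}) is then the normalized Perron eigenvector $x^{\ast}=v/(\mathbf{1}^T v)$, whose coordinates are automatically positive and sum to one.

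Next I would show that $x^{\ast}$ is globally attracting on the biologically relevant region. Since $Q$ is Metzler, $e^{Qt}$ is non-negative and, by irreducibility, strictly positive for $t>0$; hence the positive orthant is forward invariant and every non-negative non-zero initial state $\overrightarrow{X}(0)$ yields $\overrightarrow{X_t}>0$ and $W(t)>0$ for all $t>0$, so $x(t)=\overrightarrow{X_t}/W(t)$ is well defined. Expanding $\overrightarrow{X_t}$ in the eigenbasis of $Q$, the dominant term is $c_1 e^{\lambda_1 t}v$ with $c_1=(u^T\overrightarrow{X}(0))/(u^T v)$, and $c_1>0$ because $u>0$ while $\overrightarrow{X}(0)\ge 0$ is non-zero. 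Dividing by $W(t)$ and letting $t\to\infty$, the subdominant modes decay relative to the Perron mode and $x(t)\to x^{\ast}$. Uniqueness of the positive fixed point is then immediate: any fixed point of the proportion equations is an eigendirection of $Q$ contained in the simplex, and the Perron eigenvector is the only eigenvector of $Q$ with all coordinates of one sign.

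For the stability claim I would linearize Eq.~(\ref{Nonlinear1}) at $x^{\ast}$. Writing the proportion flow in the projective form $\dot{x}=Qx-x(\mathbf{1}^T Qx)$ (the quadratic term in Eq.~(\ref{Nonlinear2}) is exactly $-x_i$ times the column-sum combination $A_1x_0+A_2x_1+A_3x_2$), a standard computation shows that the eigenvalues of the Jacobian restricted to the tangent space of the simplex are precisely the spectral gaps $\lambda_j-\lambda_1$ for $j\ne 1$, all of which have negative real part by the strict dominance of $\lambda_1$. This gives local asymptotic stability, and combined with the global convergence above it establishes that $x^{\ast}$ is the unique positive stable fixed point, extending Theorem~\ref{Thm1} from the degenerate case $A_1=A_2=A_3$ to the full reversible model.

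The hard part will be the degenerate initial data. The clean convergence argument requires $c_1>0$, which holds for every non-zero non-negative state because $u$ is strictly positive, but can fail on the measure-zero set of initial conditions annihilated by the left Perron vector $u$. On that exceptional set the trajectory is driven by the subdominant eigenvalues and may be attracted to a boundary configuration rather than to $x^{\ast}$; this is exactly the situation repaired by perturbing the initial state infinitesimally, as flagged in the accompanying footnote, after which $c_1>0$ is restored and convergence to $x^{\ast}$ proceeds as above.
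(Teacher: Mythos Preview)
Your proposal is correct and follows essentially the same route as the paper's proof: lift the proportion dynamics back to the linear system $d\overrightarrow{X_t}/dt = Q\,\overrightarrow{X_t}$, invoke Perron--Frobenius on the (essentially positive, irreducible) matrix $Q$, and conclude that $x(t)$ converges to the normalized Perron eigenvector. Your version is in fact slightly tighter than the paper's---by using the strictly positive \emph{left} Perron eigenvector you show $c_1>0$ automatically for every non-negative non-zero initial state (so the perturbation flagged in the footnote is never actually needed on the simplex), and you supply explicit uniqueness and Jacobian-spectrum arguments that the paper's appendix leaves implicit.
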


\begin{proof}
Our proof is valid for general multi-phenotypic cases, so in this section the dimension of the matrix
$Q$ in Eq (\ref{Matrix1}) is not restricted to three any more, but $n$ in general.
Without loss of generality, we only prove the theorem when
$Q$ is positive (all the elements of $Q$ are positive)
\footnote{Note that the off-diagonal elements of $Q$ are positive,
$Q+\tau I$ ($I$ is the identity matrix) can be positive provided $\tau$ is large enough. It is easy to show that when $\lambda$ is an eigenvalue of $Q+\tau I$,
$\lambda-\tau$ is an eigenvalue of $Q$ accordingly. That is, the eigenvalue structure of $Q$ is the same as that of $Q+\tau I$. So we just need to
consider $Q+\tau I$ instead of $Q$.}.
According to the well-known Perron-Frobenius theory (see chapter 1 in \cite{seneta1981non}), $Q$ has a real eigenvalue $\lambda_1$ satisfying
$Re\lambda<\lambda_1$ for any other eigenvalue $\lambda$ of $Q$
(called the Perron-Frobenius eigenvalue) and $\lambda_1$ is
simple (a simple root of the characteristic equation of $Q$).

Since $\lambda_1$ is simple, the solution of Eq. (\ref{Linear1}) can be expressed as
\begin{linenomath*}
\begin{equation}
\begin{split}
&\overrightarrow{X_t}=c_{1,1}\vec{u}e^{\lambda_{1}t}+\sum_{j=2}^{m}\sum_{l=1}^{m_j}c_{j,l}\sum_{i=1}^{m_j}\vec{r^{j}_{l,i}}t^{i-1}e^{\lambda_{j}t},
\end{split}
\label{solution}
\end{equation}
\end{linenomath*}
where $\lambda_{1},\lambda_{2},\cdots{}\lambda_{m}$ are the different eigenvalues of $Q$,
$m_j$ is the algebraic multiplicity of $\lambda_{j}$,
$\vec{u}$ is the normalized ($u_1+u_2+...+u_n=1$) right eigenvector of $\lambda_1$,
$\vec{r^j_{l,i}}$ is the corresponding eigenvector of $\lambda_{j}$,
$c_{j,l}$ is determined by initial states. Suppose $c_{1,1}\neq{}0$, since Re$\lambda_i<\lambda_1~(i\neq 1)$,
\begin{linenomath*}
\begin{equation*}
\begin{split}
&\frac{\overrightarrow{X_t}}{c_{1,1}e^{\lambda_{1}t}}=\vec{u}+\sum_{j=2}^{n}\sum_{l=1}^{n_j}
\frac{c_{j,l}}{c_{1,1}}\sum_{i=1}^{n_j}\vec{r^{j}_{l,i}}t^{i-1}e^{(\lambda_{j}-\lambda_1)t}\rightarrow \vec{u}.
\end{split}
\end{equation*}
\end{linenomath*}
Thus
\begin{linenomath*}
$$\overrightarrow{x_t}=\frac{\overrightarrow{X_t}}{W}
=\frac{\overrightarrow{X_t}/c_{1,1}e^{\lambda_{1}t}}{(X_1+...+X_n)/c_{1,1}e^{\lambda_{1}t}}
\rightarrow \frac{\vec{u}}{u_1+u_2+...+u_n}=\vec{u}.$$
\end{linenomath*}
Noticing that Perron-Frobenius theory ensures that all the components of $\vec{u}$ are positive,
\emph{i.e.} $u>0$.

Before we complete the proof, we need to discuss $c_{1,1}=0$. In this case, the above method
does not work. However, since fluctuations are inevitable in real world,
$c_{1,1}=0$ will hardly happen in reality. To show this, let $t=0$ in Eq. (\ref{solution})
\begin{linenomath*}
\begin{equation*}
\begin{split}
&c_{1,1}\vec{u}+\sum_{j=2}^{n}\sum_{l=1}^{n_j}c_{j,l}\vec{r^{j}_{l,1}}=\langle\overrightarrow{X_0}\rangle^T
\end{split}
\end{equation*}
\end{linenomath*}
This is a linear equation of $c_{j,l}$. For $c_{1,1}$ we have
\begin{linenomath*}
\begin{equation*}
\begin{split}
&c_{1,1}=\frac{\textrm{det}|B^*|}{\textrm{det}|B|},
\end{split}
\end{equation*}
\end{linenomath*}
where $B=[\vec{u}\ \vec{r^{2}_{1,1}}\ \vec{r^{2}_{2,1}}\cdots \vec{r^{n}_{n_n,1}}]$, $B^*$ is just $B$ with
its first column replaced by $\langle\overrightarrow{X_0}\rangle^T$.
It is easy to add a small perturbation $\varepsilon{}\vec{v}$ to $\langle\overrightarrow{X_0}\rangle^T$,
so that all the columns of $B^*$ are linear independent, hence $c_{1,1}\neq{}0$.

\end{proof}

\end{document}